\newtheorem{theorem}{Theorem}
\title{Content Placement in Cache Networks Using Graph-Coloring}
\author{Mostafa~Javedankherad,~
        Zolfa~Zeinalpour-Yazdi,~\IEEEmembership{Member,~IEEE}

\thanks{Manuscript received xxxxx xx, xxxxx; revised xxx xx, xxxx.}
\thanks{The authors are with the Department of Electrical Engineering, Yazd University, Yazd, Iran  (e-mail:m.javedankherad@gmail.com, zeinalpour@yazd.ac.ir)}
}
\begin{document}
\maketitle
\begin{abstract}
Small cell densification is one of the effective ideas addressing the demand for higher capacity  in cellular networks.  The major problem faced  in such networks is  the wireless backhaul link and its limited capacity. Caching most popular files in the memories of small cells base station (SBSs) is an effective solution to this problem.  One of the main challenges in caching is choosing the files that are going to be stored in the memory of SBS. In this paper, we model the described caching problem as a graph. This graph is divided into four sub-graphs including placement, access, SBSs and delivery graphs. By making some modifications to the SBS graph, we convert it to a graph that can be colored.  Coloring of the generated graph is NP-hard and we use an algorithm proposed in graph-coloring area to color it. To overcome the complexity of above coloring technique, we then propose a simple graph-coloring method based on two  point processes, Matern Core-type I and II. We model our network with a new weighted graph which simply can be colored and after that the files are cached accordingly.  We evaluate the performance of our proposed methods through simulations. Our results show that by employing out proposed method in a typical considered SBSs network, the load on the macro base station can be reduced by around $25 \%$, at the distribution parameter of popularity of files equals to $0.6$, compared to conventional policy which caches the most popular content every where.
\end{abstract}

\begin{IEEEkeywords}
Small Cell Densification, Caching, SBSs Graph, graph-coloring, NP-Hard Problem, Matern Core-type I and II.
\end{IEEEkeywords}
\section{Introduction}
\IEEEPARstart{D}{ata} traffic on wireless networks has grown significantly in recent years as a result of the emergence of new communication devices such as tablets, mobile phones, etc., along with the increasing demand for watching online videos anytime and anywhere. The video files constitute the significant percentage of this data traffic. According to the Cisco's forecast, it can be expected that the demand for the traffic  of the whole increases to over 49 exabytes/month by 2021, and in particular  video files cover approximately 75 percent of the whole\cite{mj4}. These reasons have motivated researchers to look for ways to handle the burden imposed by video files on the network in the next generation of mobile networks. The network densification is one of the effective ideas posed to the traffic jump. Other ideas such as the spread of the radio spectrum in particular millimeter-wave communications, device-to-device communications (D2D) and massive MIMO techniques have also been proposed by the researchers to address this problem and they are currently  under investigation\cite{mj3, mj301,mj302,mj401}. Network densification deploys  a large number of heterogeneous base stations (BS) such as micros, picos and femtos to  improve spectral efficiency  and   the network coverage. This heterogeneous network is also called small cell networks (SCNs).
 Although better performance in terms of spatial reuse is obtained in dense SCNs, they face a major problem.   The problem is that the access points are usually connected to the network core with wireless backhaul links and therefore have limited capacity. Saving popular video files near user equipment(UE) can lead to the less  usage of backhaul links. In general, three major advantages are achieved through storing files in the SCNs\cite{mj1,kd2,mj3}:
  1) delay reduction in downloading popular files due to saving them in the  storages which are closer to the UE. 2) less need
  for expensive high capacity backhaul links 3) removing huge traffic from the main  macro base station (MBS) and distributing this load over SBSs.
Note that a significant amount of traffic is redundant in the network since UE are generally downloading popular duplicate videos from the BSs, whose redownloading from the main base stations can be avoided through this technique. In general, as the distance between the transmitter and the receiver gets shorter, the file gets received with less delay, lower cost and higher reliability. The technique of bringing popular files near UE and saving them in the SBSs is called caching.
 If the caching is properly designed, a significant improvement in reducing wireless network traffic, increasing download rates, and reducing file download latency results. Caching is done in two main phases: placement phase and delivery phase.
 The placement phase includes the embedding of popular video files at uncrowded network traffic hours, and the delivery phase includes transfer of the requested files to the UE\cite{mj6}. So one of the main issues is to perform the best file placement in the memory of SBSs in order to maximize the downloaded files from these memories, which is equivalent to the minimization of downloads from the MBS. Placement in the caches is done in both centralized and decentralized manner \cite{mj501}.

 The concept of graph coloring is one of the basics in graph theory, which can be used for our end. The coloring scheme of the graphs that began with conjecture of four-color by Arthur Cayley's  \cite{mj1400}, is one of the major branches in the graph theory. It is a graph indexing in which different tags are assigned to the adjacent vertices or adjacent edges.  This concept is used in different areas like scheduling, sudoku tables, assigning frequencies to the radio stations, register allocation and many other applications in the field of computer and other sciences.

In this paper, we intend to use graph coloring to cache popular files in the SBS' memories so that the average download of the files from the macro base station is minimized. Heterogeneous cellular networks are challenging when a UE is simultaneously covered by several SBS base stations. In this situation, the user has access to the files stored in the caches of all those SBSs. So if we do the placement phase so that SBSs which cover a user have fewer common stored files, it can be concluded that the user can access more files from the cache memory of those SBSs and less need for downloading them from the MBS. Our discussion becomes more beneficial when the degree of popularity of files and the rank of accessibility of caches are considered in loading the cache memory of SBSs. Since each user simultaneously has access to a larger number of files with a higher popularity, the probability of receiving the desired file from the SBS cache memory increases. In the caching scheme, we try to ensure that two SBSs which can serve the same users are not loaded by similar files and  coloring can do this for us.  This is the basis of  the placement in this paper. The problem of coloring, like the problem of caching, is an NP-hard optimization problem, and a number of algorithms for solving them are presented in \cite{mj20, mj14}.
\subsection{Related Work}
Caching has been a subject of interest to many researchers. Extensive research has been conducted on studying the
content-placement to obtain gains over the conventional policy which caches the most popular content everywhere \cite{kd1}. Different performance metrics have been considered to design the content-placement strategies
in the literature such as average downloading delay \cite{kd2, kd3, kd4}, average BER \cite{kd5}, average caching failure probability \cite{kd6}, offloading probability \cite{kd7}, density of successful receptions \cite{kd8}, hit probability \cite{kd6}, average success probability of content delivery \cite{kd9}, and the expected backhaul rate and the energy consumption \cite{kd10}.

Some papers address the placement problem for the more developed scenarios like hierarchical networks \cite{kd11}, multi-relay networks \cite{kd12} and  wireless cooperative systems \cite{kd13},  (including MIMO \cite{kd14} and  coordinated multi-point joint transmission (CoMP-JT) schemes \cite{kd15}). Also \cite{kd16, kd17} take into account the difference in users’ file preferences and develop a caching policy based on user preference profile. The authors in \cite{kd18, kd19} figure out effects of wireless fading channels  and show the trade off between the channel selection diversity  and file diversity and derive the content placement  considering the interactions among multiple users and network interference, and finally the mobility pattern of users is considered in \cite{kd20, kd21} to improve cache performance.
  Most of these scenarios are outcome of the problems which are proved to be NP-hard. Therefore, researchers have to sub-optimal techniques and practical algorithms  with  near optimal performance \cite{kd22} leveraged by the concepts such as greedy algorithms \cite{kd2} and the theory of belief propagation\cite{kd3}. Graph-coloring is another technique that can be used to assign files to different access points. The basic idea in the graph-coloring-based algorithm is to iteratively dye the uncolored vertices and color them with the color not used by neighboring vertices. While graph coloring ideas have been used  for resource allocation in D2D communication underlying cellular networks \cite{kd23, kd24, kd25}, to the best of our knowledge, this is the first work to use them to design algorithms for content placement in cache networks.

\subsection{Methodology ad Contributions}
The contributions of the paper are  stated shortly as follows:

We model the network as a graph. In order to do caching in this system, we transform the resulting graph into four graphs, including the placement, access, delivery and SBSs graphs. The SBSs graph is a general weighted graph so that we try to make new changes to this graph with respect to the placement, access and delivery graphs. We set a priority over the file stored in SBS's memories by doing the coloring process with the least possible colors. We do this using two different approaches.

Method 1:
We introduce a weighted graph in such a way that the weight of each edge is equal to the distance between two SBSs, corresponding to two vertices. To apply the coloring, we have to convert it into a graph without weight. To this end, we set the threshold in both universal and individual forms on the weight of edges and keep the edges with the weight below this threshold and remove the rest. We run a coloring algorithm on the resulting graph and prioritize coloring the vertices with higher degrees. We set the obtained colors for the vertices, which are numerical indices, as the priority of the filling of the corresponding SBS memories. We also sort the popular files  in a descending order. By filling the memories of the SBSs that correspond to the vertices with lower color index, we obtain  the placement graph.

 Method 2:
  We consider the position of SBSs as a Poisson point process. SBSs that share a common range with each particular SBS are placed as SBSs of the same class. Using two  point processes  Matern Core-type I and II, we assign weights to SBSs. Then we define the SBSs graph so that there is an edge between two vertices if the equivalent SBSs are  in the  same class and also assign to each vertex  the SBS weight corresponding to it. We apply coloring algorithm to this graph, which its coloring priority is to color  the vertex with a larger weight. We set the obtained colors for the vertices, which are numerical indices, as the priority of the filling of the correspond SBS memories. We sort the popular files in descending order. By filling the memories of the SBSs that correspond to the vertices have lower color index, we obtain the placement graph.

 The structure of the paper is as follows. System model and some necessary assumptions are given in Section II. Modeling the caching problem  as a graph and description of the coloring problem studied in Sections III. Sections IV and V present  two caching algorithms for coloring the graph. Simulation results are given in Section VI followed by a conclusion given in Section VII.
\section{System model and assumpthions}\label{shokr1}
\renewcommand{\labelenumii}{\arabic{enumi}}

Consider a heterogeneous  cellular network  with multi-tier structure comprising a MBS and SBSs such as picocells, femtocells and mobile users.

 In this model, it is assumed that SBSs are equipped with memories that can cache popular files. The macrocell covers the entire network, but a SBS covers a limited geographical range around itself. The sets $S=\{S_1,S_2,\cdots,S_{|S|}\}$ and $M=\{M_1,M_2,\cdots,M_{|S|}\}$ represent the SBSs and their memory sizes,  respectively. In this network, users who are introduced by $U=\{u_1,u_2,\cdots,u_{|U|}\}$ are mobile, unlike the BSs which are fixed. The structure of this network is shown in Fig. 1.

Assume that this network is clustered by the indices in  set $C=\{c_1,c_2,\cdots,c_{|C|}\}$ of size $|C|$. In each cluster, there are a number of SBSs and  mobile users.   The user can access several SCS at the same time and can be served by them. We put the popular files that are downloaded by users during a specific time period in a set called $F=\{f_1,f_2,\cdots,f_{|F|}\}$ of size $|F|$ which includes all popular files. Suppose the probability distribution of files is Zipf,  so the probability of requesting a file by a particular mobile user is given by
\begin{equation}\label{zipf}
P_f=f^{-\alpha}(\sum_{i=1}^{|F|}i^{-\alpha}),
\end{equation}
where $\alpha$ is the distribution parameter of Zipf function\cite{mj302}. We assume that the size of the files is equal which is not in contradiction with the reality as unequal files can be splitted into some slices of equal size.

\begin{figure}[t!]\label{shabake}
\begin{center}
\includegraphics[width=0.5\textwidth]{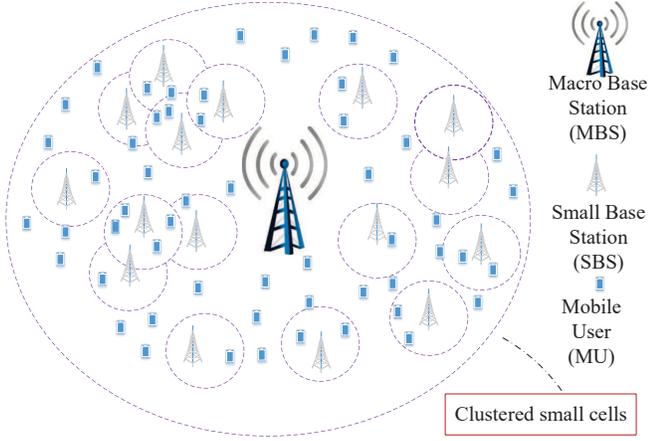}
\center{\caption{System model representation.}}
\end{center}
\end{figure}

\section{Graph-based model of the network}\label{mostafa1000}
In the following, we intend to model the considered network by  graphs so that we can apply the concepts implied in graph theory to them. We model the entire network as a graph like Fig. 2(a) where $\{S_1,S_2,\cdots,S_{|S|}\}, \{u_1,u_2,\cdots,u_{|U|}\}$ and $\{f_1,f_2,\cdots,f_{|F|}\}$ are SBSs, users, and files, respectively. It should be emphasized that this graph is a general graph and in a real network, some edges may not exist.
\begin{figure}[t!]\label{stratureNetwork2}
\begin{center}
\includegraphics[width=0.4\textwidth]{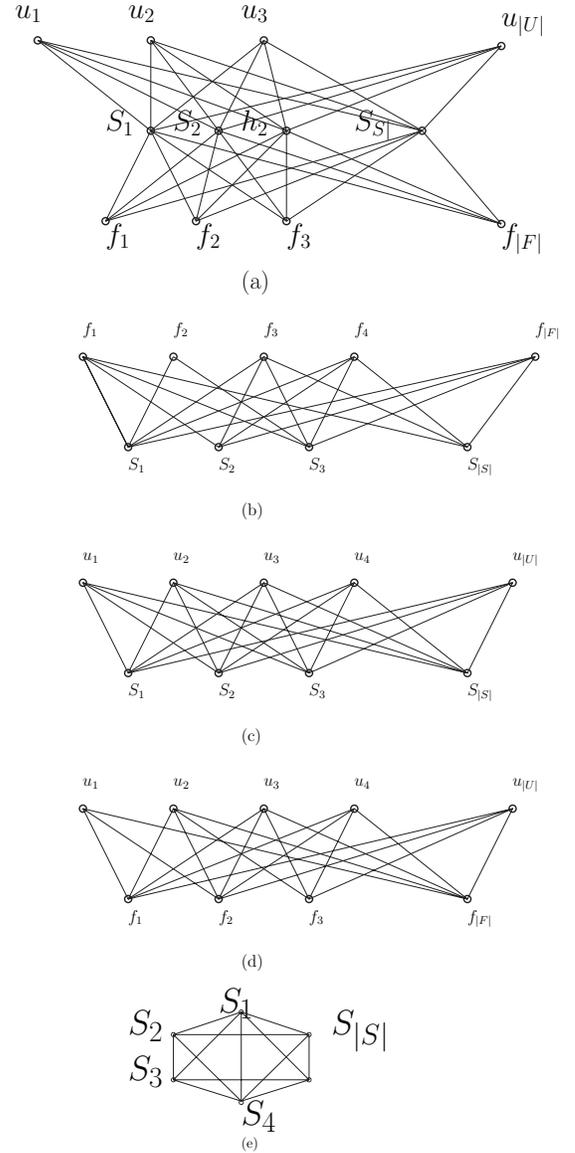}
\caption{(a)Whole network graph (b)Placement graph (c)Access graph (d)Delivery graph (e)SBSs graph.}
\end{center}
\end{figure}

To work more precisely on the graph, we cut it into four graphs.

1)\textbf{Placement graph} which is a bipartite graph whose a series of vertices are from the set of files $F=\{f_1,f_2,\cdots,f_{|F|}\}$, and the other ones are from the set of SBSs  $S=\{S_1,S_2,\cdots,S_{|S|}\}$. For $0\leq i\leq |F|$ and $0\leq j\leq |H|$, there is an edge between two vertices $f_i$ and $S_j$ if the file $f_i$ has been embedded in the  memory of SBS $S_j$. Fig. 2 is an example of a placement graph, with the assumption that all files can be embedded in all SBSs. However it is not a practical scenario, because the memory of the SBSs are limited, so a limited number of files can be stored in the memory of SBSs and therefore some edges are deleted.

In this graph, if the memory of  SBSs are equal, the corresponding degree of nodes will be equal. The vertices of the nodes in the file side are not equal, since the number of time for each file's placement in the SBSs are generally different.

2)\textbf{Access graph} which is a bipartite graph whose a series of vertices are from the set of users $U=\{u_1,u_2,\cdots,u_{|U|}\}$, and the other series of the vertices are from the set of SBSs $S=\{S_1,S_2,\cdots,S_{|S|}\}$. For $0\leq i\leq |U|$ and $0\leq j\leq |S|$, there is an edge between two vertices $u_i$ and $S_j$ if the user $u_i$ has access to the SBS's memory $S_j$. Fig. 2(b) is an example of an access graph, with the assumption that all users have access to all SBSs, which is not the case in practice.

3)\textbf{SBSs graph} which is a weighted graph in which SBSs are the vertices of that graph. This graph is not necessarily bipartite. The way to build this graph is to have one vertex for each SBS. There is an edge between every two vertices $S_i$ and $S_j$ where $0\leq i,j\leq |S|$, and we call it the weights of the edges, and show it with $w_{i,j}$.  As there is an edge between each vertex, this graph is a complete graph. An example of this graph is seen in Fig. 2(e).

4)\textbf{Delivery graph} which is a bipartite graph whose one series of vertices are from the set of files $F=\{f_1,f_2,\cdots,f_{|F|}\}$, and the other vertices are from the set of users $U=\{u_1,u_2,\cdots,u_{|U|}\}$. For $0\leq i\leq |F|$ and $0\leq j\leq |H|$, there is an edge between two vertices $f_i$ and $u_j$ if there is a path between two vertices $u_i$ and $f_j$ in the general graph of the network. An example of this graph is seen in Fig. 2(d).

The location of SBSs are fixed and users are moving.  We focus on the SBSs graph  in order to obtain the best placement graph, and then obtain the delivery graph according to the equivalent placement and access graphs. We should emphasize that our strategy is such that to be independent of the access graph since generally users are moving and so their corresponding graph is constantly changing. More precisely, the access graph is not the base of our caching strategy.

A matrix corresponding to each graph is defined as the adjacency matrix of the graph. Adjacency matrix of a simple graph is a square matrix of order $n\times n$ that entry of i-th raw and j-th column of this matrix is  1 if and only if there is an edge between the  i-th vertex and j-th vertex in the equivalent graph and otherwise is 0. Also, Adjacency matrix of a simple weighted graph is symmetric square matrix of order $n\times n$ that entry i-raw and j-column of this matrix is the weight  of edge between the  i-th vertex and j-th vertex in the equivalent graph.

\subsection{Graph coloring concepts}
Labeling a graph is the attribution of labels to the graph edges,  vertices or both of them, which it is typically done by assigning integers to them\cite{mj15}. The general approach is to use colors to match the edges or vertices that two adjacent edges do not have the same color. In the simplest case, the vertex coloring of a simple  un-weighted graph G(V,E) can be defined as
\begin{eqnarray}\label{vertec}
\begin{array}{cc}
f:V(G)\longrightarrow N\\
if ~~uv\in E(G)~~ then~~ f(u)\neq f(v).
\end{array}
\end{eqnarray}
To color a weighted graph, we can convert it to a simple un-weighted graph, then color it. Although the  edge coloring of a graph is similarly defined, but in the remainder of this paper where the color is mentioned, the color of the vertices is our intention.
 Coloring of a graph with a maximum of k colors is called k-coloring.  The smallest number of colors needed to color a graph G is called its chromatic number, and is often denoted as $X(G)$. A graph which can be colored with k colors is called k-colorable graph and a set of vertices of a graph that are colored with the same color forming a color class. Each color class forms an independent set. Colored polynomials refer to the number of coloring methods of a graph, in which the number of colors used does not exceed the specified number.

 \section{Caching through Graph coloring}\label{method1}
 In this section, we aim to apply graph coloring to cache the popular files in the memories exist in the network. First, we create a graph for our system and then we run the coloring on it. According to the graph model explained in section \ref{mostafa1000}, in order to obtain the appropriate graph, we need to make changes to the SBSs graph. We make these changes in order to obtain a proper placement graph by coloring the resulted graph.
 \subsection{Modification  of SBSs graph}\label{sepas1}
We consider equivalent adjacent matrix of a weighted graph on SBSs which for $1\leq i,j\leq |S|$, entry of  i-row and j-column is $w_{ij}\neq 0$.   This matrix is adjacency matrix of a weighted graph whose entries may not be zero and one. In order to convert this matrix to the adjacency matrix of an  un-weighted graph, we put threshold (Tr) on the entry of that matrix then we define the new weights as follows
\begin{eqnarray*}
w^\prime_{ij}=\left\{
\begin{array}{cc}
1, & w_{i,j}\geq Tr(i,j) \\
0 & w_{i,j}<Tr(i,j).
\end{array}
\right.
\end{eqnarray*}
Now draw the corresponding un-weighted new graph $G^\prime$ based of the new matrix $w^\prime$. In this matrix we have an edge between every two SBSs nodes with the weight larger than threshold level and  vice versa.
With this definition, we convert the weighted graph into a weightless graph because we need to have a weight-free graph to apply coloring on it. Also  applying a thresholds on the edges, we have more flexibility for coloring. Here, we consider the weights in the graph of SBSs equal to the distance between SBSs nodes(for both SBS nodes $S_i$ and $S_j$ where $0\leq i,j\leq |S|$ i. e. $w_{i,j}$=$d_{i,j}$). The threshold on the weights can be defined in both individual and universal cases. In individual mode, for each edge, we define the threshold value separately, but in the universal mode a threshold is defined for all the edges.  In this model, assuming that the range of the SBS $S_i$ for $1\leq i\leq |S|$ is equal to $R_i$ , we define the universal and individual thresholds as following
 \begin{eqnarray*}
Tr_(i,j)^{(IND)}=min\{R_i,R_j\}
\end{eqnarray*}
 \begin{eqnarray*}
Tr_(i,j)^{(UNI)}=min\{Tr_(i,j)^{(IND)}, for ~0\leq i,j\leq |S|\},
\end{eqnarray*}
where $Tr_(i,j)^{(IND)}$ and $Tr_(i,j)^{(UNI)}$ are individual and universal threshold between $S_i$ and and $S_j$ for $~0\leq i,j\leq |S|$, respectively.
 \subsection{Placement by coloring the graph}\label{methcol1}
 Depending on the system model, a user may simultaneously access several SBSs. In the case that  users only have access to a SBS, the best way to cache popular files  is to put most popular files into the cache of SBSs, but if a user simultaneously has access to two or more SBSs, placing non-common files on the SBSs results in the increment of the probability of access to more files.The graph-coloring algorithms are based on the following two conditions: 1)No two adjacent vertices have the same color 2)An algorithm that uses fewer colors is considerable (in other words, whatever the number of colors used in the algorithm is closer to $X(G)$, it is more desirable).  By applying the proper coloring algorithm, a user do not have access to two memories of SBSs  with common files and also access files with higher popularity.  In addition to these two conditions, we add the condition that  in the coloring, the priority is  with the vertex with higher degree. This additional condition helps us place popular files in the SBSs exist in the dense area, first.

In the following we present our proposed algorithm to make the placement graph. We consider the graph $G^\prime$  as the basis of algorithm \ref{algo1} and execute the algorithm steps on it.

In this algorithm, we receive the graph of SBSs constructed in Section \ref{sepas1} and  set of popular file $F=\{f_1,f_2,\cdots , f_{|F|}\}$ where $i\leq j$ then $p(f_i)\leq p(f_j)$, and  assign some of these files to SBS in order to build a placement graph. We apply the coloring algorithm, which has a color priority, to the color of the vertices of the larger degree on the graph. By putting this priority, we plan to first fill the memory of SBS that has more probability to be placed in a more crowded position.  Then we use vertex coloring, which ensures that two adjacent vertices do not have the same color. The condition of not having the same color for two adjacent vertices helps us to not place a similar file in SBSs that can serve a common user. The result of this, is a graph that assign a number  to each vertex where these numbers represent the priority to fill the memory of SBSs. Based on this placement matrix, the access matrix can be provided with the assumption that each user  in the range of each SBS can have access to its memory.
\begin{algorithm}\label{algo1}
\caption{ Placement based on Coloring}
\begin{algorithmic}
\floatname{algorithm}{Procedure}
\renewcommand{\algorithmicrequire}{\textbf{Input:}}
\renewcommand{\algorithmicensure}{\textbf{Output:}}
\REQUIRE Graph $G^\prime$ with vertex set $V(G^\prime)=\{v_1,v_2,\cdots , v_|S|\}$ and set of popular file $F=\{f_1,f_2,\cdots , f_{|F|}\}$  where $i\leq j$ then $p(f_i)\leq p(f_j)$\\
\FORALL {$v \in V(G^\prime)$}
\STATE assign proper color to v by vertex coloring algorithm based on the high degree of priority with index color $\{1,2,\cdots , X(G^\prime)\}$
\ENDFOR
\FORALL {$SBS \in S$}
\STATE Fill in the memory of SBSs based on the high priority of the corresponding color index
\ENDFOR
\ENSURE Placement Graph
\end{algorithmic}
\end{algorithm}

With respect to the SBSs graph of different networks, different $X(G)$ are obtained. Several upper and lower boundaries on $X(G)$ can be considered as follows.
  \begin{theorem}
 $\omega(G)\leq X(G)<\Delta(G)+1$, where $\Delta(G)$ is maximum degree of graph G and $\omega(G)$ is the largest set of mutually adjacent vertex in G.
 \end{theorem}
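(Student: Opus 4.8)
The plan is to treat the two inequalities independently: the left one by a counting argument on a maximum clique, and the right one by analysing the sequential (greedy) coloring procedure that already underlies Algorithm~\ref{algo1}.

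For the lower bound $\omega(G)\leq X(G)$, I would let $K\subseteq V(G)$ be a set of $\omega(G)$ mutually adjacent vertices. Fix any proper vertex coloring $f$ of $G$. Since every two vertices of $K$ are joined by an edge, the defining condition~\eqref{vertec} forces $f$ to assign pairwise distinct values on $K$, so $f$ uses at least $\omega(G)$ distinct colors. Taking the minimum over all proper colorings gives $X(G)\geq\omega(G)$.

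For the upper bound $X(G)<\Delta(G)+1$, i.e. $X(G)\leq\Delta(G)+1$ in the weak form, I would fix any ordering $v_1,\dots,v_{|S|}$ of $V(G)$ (in the placement algorithm this is the non-increasing degree ordering, but any ordering works), and color greedily: when $v_k$ is processed, give it the smallest positive integer not already used on the neighbours of $v_k$ lying in $\{v_1,\dots,v_{k-1}\}$. Because $v_k$ has at most $\deg(v_k)\leq\Delta(G)$ neighbours in all of $G$, at most $\Delta(G)$ colors are blocked at step $k$, so some color in $\{1,\dots,\Delta(G)+1\}$ is always free. This yields a proper coloring with at most $\Delta(G)+1$ colors, hence $X(G)\leq\Delta(G)+1$.

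The main obstacle is the \emph{strict} inequality as literally stated: the greedy bound above only gives $X(G)\leq\Delta(G)+1$, and this is tight for complete graphs $K_{\Delta(G)+1}$ and for odd cycles. Obtaining $X(G)\leq\Delta(G)$ is precisely Brooks' theorem, valid for every connected graph that is neither complete nor an odd cycle. To close this gap I would invoke the standard proof of Brooks' theorem: if $G$ is non-regular, root a breadth-first ordering at a minimum-degree vertex so that every vertex except the last has an uncolored neighbour when processed, leaving it at most $\Delta(G)-1$ forbidden colors; the regular, $2$-connected, non-complete case is handled by selecting a vertex $v$ with two non-adjacent neighbours $u,w$, coloring $u$ and $w$ first with the same color, and ordering the rest by decreasing distance from $v$. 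I expect the case analysis ruling out the degenerate $2$-connected configurations to be the delicate part, and I would note that the cleanly and unconditionally provable form of the statement is the weak inequality $\omega(G)\leq X(G)\leq\Delta(G)+1$.
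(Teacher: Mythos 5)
Your proposal is correct in substance, but it takes a genuinely different route from the paper for the simple reason that the paper offers no argument at all: its ``proof'' is a pointer to the textbooks \cite{mj601,mj602,mj603}. What you supply is the standard self-contained derivation — the clique forces $\omega(G)$ distinct colors under condition \eqref{vertec}, and sequential greedy coloring never needs more than $\Delta(G)+1$ colors because at most $\Delta(G)$ colors are blocked at each vertex. This is exactly the argument those references contain, so nothing is lost by writing it out, and something is gained: your greedy argument is the same procedure the paper actually runs in Algorithm 1, so the bound becomes a statement about the algorithm's output rather than an imported fact.

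The most valuable part of your write-up is the observation that the theorem as literally stated, with the strict inequality $X(G)<\Delta(G)+1$, is false: complete graphs and odd cycles attain $X(G)=\Delta(G)+1$, and the strict form is Brooks' theorem, which requires excluding precisely those graphs. This is not a gap in your proof but an error in the statement you were asked to prove, and the paper itself is internally inconsistent on this point — Section IV-C notes that a complete graph needs a number of colors equal to its number of vertices, i.e.\ $X(G)=\Delta(G)+1$, and the SBSs graph is even introduced as a complete graph in Section III. You are right that the cleanly provable, unconditional form is $\omega(G)\leq X(G)\leq\Delta(G)+1$, and the paper should either state it that way or add the Brooks hypotheses ($G$ connected, neither complete nor an odd cycle) if strictness is wanted. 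Your sketch of the Brooks argument (non-regular case via an ordering rooted at a minimum-degree vertex; $2$-connected regular case via two non-adjacent neighbours given the same color) is the standard one and would close that gap if the stronger statement is retained.
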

 \begin{proof}
 See in \cite{mj601,mj602,mj603}.\\
\end{proof}
\begin{theorem}
For every complete, interval  and bipartite graph $G$, $X(G)=\omega(G)$, where a graph is an interval graph if it has an intersection model consisting of intervals on a straight line.
 \end{theorem}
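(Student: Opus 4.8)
The plan is to establish the two inequalities $\omega(G)\le X(G)$ and $X(G)\le\omega(G)$ separately. The first holds for \emph{every} graph and is exactly the lower bound recorded in Theorem 1, so the remaining task is to prove $X(G)\le\omega(G)$ for each of the three families. For the complete graph $K_n$ this is immediate: every pair of vertices is adjacent, so the $n$ vertices need pairwise distinct colors, giving $X(K_n)=n=\omega(K_n)$. For a bipartite graph $G$ with bipartition $(A,B)$, assigning color $1$ to all of $A$ and color $2$ to all of $B$ is a proper coloring, so $X(G)\le 2$; if $G$ has at least one edge then that edge is a clique of size $2$ and hence $\omega(G)=2=X(G)$, while if $G$ is edgeless then trivially $X(G)=\omega(G)=1$.

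The substantive case is the interval graph, where I would use a greedy (first-fit) coloring along a carefully chosen vertex order. Fix an interval representation $\{I_v\}_{v\in V(G)}$ on the real line, write $\ell_v$ for the left endpoint of $I_v$, and order the vertices as $v_1,v_2,\dots,v_{|S|}$ so that $\ell_{v_1}\le\ell_{v_2}\le\cdots\le\ell_{v_{|S|}}$. Process the vertices in this order, giving each $v_k$ the smallest positive integer not already used on its neighbors among $v_1,\dots,v_{k-1}$. The key observation is that if $v_j$ with $j<k$ is a neighbor of $v_k$, then $I_{v_j}\cap I_{v_k}\ne\emptyset$ together with $\ell_{v_j}\le\ell_{v_k}$ forces $\ell_{v_k}\in I_{v_j}$. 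Hence every already-colored neighbor of $v_k$, as well as $v_k$ itself, contains the point $\ell_{v_k}$; these intervals are therefore pairwise overlapping and form a clique. If $v_k$ has $d$ already-colored neighbors, this clique has size $d+1\le\omega(G)$, and the greedy rule assigns $v_k$ a color in $\{1,\dots,d+1\}$. Taking the maximum over $k$ shows the coloring uses at most $\omega(G)$ colors, so $X(G)\le\omega(G)$.

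Combining the two inequalities yields $X(G)=\omega(G)$ in all three cases. I expect the only genuine obstacle to be the interval-graph step: making precise why the previously processed neighbors of the current vertex all contain its left endpoint and thus form a clique. Once that elementary geometric fact is in place, the bound on the greedy algorithm is routine, and the complete and bipartite cases are essentially observations.
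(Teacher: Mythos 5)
Your proposal is correct: the lower bound $\omega(G)\le X(G)$ is universal, the complete and bipartite cases are immediate, and your greedy left-endpoint argument for interval graphs is sound (the key geometric fact, that every earlier-ordered neighbor's interval contains the left endpoint of the current interval, follows because those intervals start no later and must still reach that point to intersect). The paper itself gives no argument at all, deferring entirely to the cited graph-theory textbooks; the proof you supply is precisely the standard one those references present, so you have simply filled in the omitted details rather than taken a different route.
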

 \begin{proof}
 See in \cite{mj601,mj602,mj603}.\\
\end{proof}
\begin{theorem}
 For every graph G, $X(G)\geq \frac{n(G)}{\alpha(G)}$, where $\alpha(G)$ is the maximum size of vertices that are mutually non-adjacent and $n(G)$ is the number of vertices.
 \end{theorem}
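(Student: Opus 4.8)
The plan is to exploit the single defining property of a proper vertex coloring that makes it useful here: each color class is an independent set. First I would fix an \emph{optimal} proper coloring of $G$, that is, a map $f\colon V(G)\to\{1,2,\dots,X(G)\}$ satisfying $f(u)\neq f(v)$ whenever $uv\in E(G)$ (as in the definition of vertex coloring given above), and form the color classes $V_k=\{v\in V(G):f(v)=k\}$ for $k=1,\dots,X(G)$. Since $f$ assigns every vertex exactly one value, the sets $V_1,\dots,V_{X(G)}$ are pairwise disjoint and their union is all of $V(G)$; in particular $n(G)=\sum_{k=1}^{X(G)}|V_k|$.

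Next I would observe that each $V_k$ consists of mutually non-adjacent vertices: if $u,v\in V_k$ with $u\neq v$ were adjacent, then $f(u)=k=f(v)$ would contradict properness. By the definition of $\alpha(G)$ as the maximum size of a set of pairwise non-adjacent vertices, this gives $|V_k|\le\alpha(G)$ for every $k$. Substituting this bound into the vertex count yields $n(G)=\sum_{k=1}^{X(G)}|V_k|\le X(G)\,\alpha(G)$, and dividing by $\alpha(G)>0$ gives $X(G)\ge n(G)/\alpha(G)$, which is the claimed inequality.

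There is essentially no hard step here; the argument is a one-line counting (pigeonhole) bound. The only points meriting a moment's care are that the color classes genuinely partition $V(G)$ (so that the sum of their sizes is exactly $n(G)$, not merely an upper bound) and that $\alpha(G)\ge 1$ for any nonempty graph, so the final division is legitimate. If desired, one can remark that the bound is tight — it is met with equality by complete graphs and, more generally, by vertex-disjoint unions of cliques of equal size — but this is not required for the statement.
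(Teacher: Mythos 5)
Your proof is correct and complete: the partition of $V(G)$ into color classes, each an independent set of size at most $\alpha(G)$, immediately gives $n(G)\le X(G)\,\alpha(G)$. The paper itself supplies no argument for this theorem (it only cites standard graph-theory textbooks), and your derivation is precisely the standard one found in those references, so there is nothing to reconcile.
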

 \begin{proof}
 See in \cite{mj601,mj602,mj603}.\\
\end{proof}

In our system model N(G) is the number of SBSs. The greater the density of the SBSs the larger $\Delta(G)$ and $\omega(G)$, and smaller  $\alpha(G)$ in the graph constructed in \ref{sepas1}. So the number of colors should be used to color, increases in our algorithm.

\subsection{Graph Coloring Problem and Formulating an Integer Program}
In the previous section, an algorithm was introduced which in one of its step, we colored graph with  a proper coloring method. A proper coloring is an optimization problem that is to find a vertex coloring with minimum number of colors. This optimization problem can be formulated as an integer programming\cite{mj15}. Suppose that $i\in V, 1\leq j\leq n$ where $n=|V|$. We consider the variable $x_{ij}$ as follow

 \[
 x_{ij}=
 \begin{cases}
 1 \quad  if~ color ~j ~is ~assigned~ to~ vertex~ i\\
 0\quad otherwise
 \end{cases}
 \]
 In the worst case, a graph can be colored by assigning different colors to the top of the graphs, so this variable is sufficient. The variable $w_j$ for $1\leq j \leq n$ is also defined as follow
  \[
 w_{j}=
 \begin{cases}
 1 \quad  if~ color ~j ~is ~used~ in~ coloring~ i\\
 0\quad otherwise
 \end{cases}
 \]
 In the other words, $w_{j}=1$ if there is a vertex  i such that $x_{ij}=1$. The integer programming model is expressed as follow  based on the mentioned variables.

 \begin{align*}
\hspace{-4.15cm}\mathop {\mathbf{min}}\,\,\,\,\sum_{j=1}^{n}w_j
\end{align*}

\begin{align*}
\textbf{\text{s.t.}}\hspace{1cm}&\mathbf{C\cdot4\cdot1}\hspace{.5cm}x_{ij}+x_{kj}\leq w_j, \forall ik\in E(1\leq j\leq n)\\
&\mathbf{C\cdot4\cdot2}\hspace{.5cm}x_{ij}\in \{0,1\}, \forall i\in V(1\leq j\leq n)\\
&\mathbf{C\cdot4\cdot3}\hspace{.5cm}w_{kj}\in \{0,1\}, (1\leq j\leq n). \\
&\hspace{5cm}\mathbf{(P\cdot4\cdot1)}
\end{align*}

 \begin{theorem}
 graph-coloring problem is NP-hard problem.
 \end{theorem}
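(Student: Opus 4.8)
The plan is to establish NP-hardness by giving a polynomial-time reduction from a problem already known to be NP-complete. I would reduce $3$-SAT to the decision version of graph coloring (``given a graph $G$ and an integer $k$, is $X(G)\le k$?''); since the optimization problem $(\mathbf{P\cdot4\cdot1})$ of computing $X(G)$ is at least as hard as this decision problem, NP-hardness of the decision problem suffices. In fact I would prove the stronger statement that deciding $3$-colorability is NP-complete, which already forces the general problem to be NP-hard because $3$-colorability is merely the special case $k=3$.

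Membership in NP is the easy direction: a candidate color assignment $f:V(G)\to\{1,\dots,k\}$ can be checked against the properness condition \eqref{vertec} in time $O(|E(G)|)$, so every ``yes'' instance has a short, polynomially verifiable certificate.

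For the hardness direction I would take an arbitrary $3$-CNF formula $\phi$ with variables $x_1,\dots,x_n$ and clauses $C_1,\dots,C_m$ and build, in time polynomial in $n+m$, a graph $G_\phi$ with $X(G_\phi)\le 3$ if and only if $\phi$ is satisfiable. The construction uses three kinds of gadgets: a ``palette'' triangle on vertices $T,F,B$, so that in any proper $3$-coloring these receive three distinct colors which we label True, False and Base; for each variable $x_i$, a pair of adjacent vertices representing the literals $x_i$ and $\overline{x_i}$, each also joined to $B$, which forces exactly one of them to be colored True and the other False; and for each clause a small OR-gadget wired to the three literal vertices occurring in that clause and to $B$, designed so that it admits a proper $3$-coloring precisely when at least one of those literal vertices is colored True. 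Proving both implications — that a satisfying truth assignment yields a proper $3$-coloring of $G_\phi$, and that any proper $3$-coloring of $G_\phi$ reads off a satisfying assignment — then completes the reduction, and together with NP membership gives NP-completeness of $3$-colorability and hence NP-hardness of the general coloring problem.

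The main obstacle is the correctness of the clause OR-gadget: one must verify that the chosen widget (the standard six-vertex gadget, or any equivalent) is $3$-colorable under \emph{every} assignment of True/False to its three inputs except the all-False one, fails to be $3$-colorable in the all-False case, and imposes no further constraint on the inputs. This bidirectional case analysis, plus confirming that $G_\phi$ has size polynomial in $n+m$, is where the real work sits; the remaining ingredients (the palette and literal gadgets, NP membership, and the trivial observation that general coloring subsumes $3$-coloring) are routine.
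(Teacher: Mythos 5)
Your reduction from $3$-SAT to $3$-colorability is correct and is in fact the classical argument contained in the very references the paper cites for this theorem (Garey, Johnson and Stockmeyer's ``Some simplified NP-complete graph problems''), since the paper itself offers no proof beyond that citation. The one caveat is that you have outlined rather than completed the verification of the clause OR-gadget, but that case analysis is standard and goes through as you describe.
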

 \begin{proof}
 See in \cite{mj16,mj17}.\\
\end{proof}
 This NP-hard problem can be solved in polynomials time for special classes of graphs, for example, perfect graphs \cite{mj15}.The simplest case for this problem is the coloring of complete graph, where the complete graph is a graph with at least one edge between every two vertices. For the complete graph, we need the largest color to color a graph, which is equal to the number of vertices of the graph. In other words, in the complete graph, we assigned all vertices with different colors, since in this graph, both vertices are adjacent.

Accurate and approximate algorithms are considered as two major approaches to solve graph coloring problems\cite{mj20}. Accurate algorithms can solve the coloring of graphs with a maximum of 100 vertices. In the case of coloring graphs with more than 100 vertices, approximate algorithms are needed. The Greedy constructive, Local Search heuristics and Meta heuristics are the examples of the approximate algorithm\cite{mj18,mj19}. DSATUR and RLF are the most important ways to  greedy algorithms that their outputs has recently been used as the initial solution for advanced Meta heuristic algorithms\cite{mj20}. Local Search algorithms  are
 used in hybrid methods that for example, the TS algorithm is one of them\cite{mj20}. The Clique Algorithm by Ashay Dharwadker is  an accurate algorithm  for vertex coloring\cite{mj14}.

\section{Caching 's strategy based on  clustering, categorizing and weighing SBS nodes}\label{method2}
In the previous section, caching was performed applying graph coloring on the graph of SBSs.  We constructed this graph using the threshold setting on a weighted graph where  the distance between the SBSs  used as the weights of the edges. This matrix can also be constructed in other ways. In this section, we intend to construct this matrix using two point processes so that each vertex is weighted.

\subsection{matern-type I and II process}\label{jk2}
The hard core process is a point process that results from a poisson process in which the parent points with a certain property attract the child's points. This process determines the center of non-overlapping objects, such as a circle  with radius $R_c$. These processes can be marked by marks that are independent of the process points and are i.i.d. as a variable.
An example of the hard core process is the matern hard core process that was introduced in 1960 and 1986 by Matern\cite{mj21}. The matern-type I and II processes are two kinds of matern hard core processes that we use them for the classification. If a circle with radius D has been considered around a point in a Poisson process  $N_b$ with density $\lambda_b$,  then the points in the circles do not overlap each other form the Matern Core-type I process. We present the resulting process with $N_b^{I}$ and define it as follow
\begin{equation}\label{jk1}
N_b^{I}=\{x\in N_b:B(x,D)\cap B(y,D)=\emptyset , ~\forall y\in N_b\}.
\end{equation}
 To form Matern Core-type II from a Poisson process $N_b$ with density $\lambda_b$, we are assigned randomized marks distributed uniformly between zero and one to Poisson process points. The points where their marks are larger than their neighbors in distance D from are removed. The points that remain will form a Matern Core-type II process. We present the resulting process with $N_b^{II}$ and define it as follow
\begin{equation}\label{jk1}
N_b^{II}=\{x\in N_b:m(x)<m(y),~\forall y\in N_b\cap B(x,D)\setminus\{x\}\}.
\end{equation}

We use the matern-type II process to cluster SBS nodes that cover the same users. The procedure is done in such a way that  the location of the SBS nodes is a Poisson point process with density $\lambda_b$. Using this process and assuming $ِD=2R_c$ we get matern-type II process with point in $N_b^{II}$. Consider the obtained process points as the center of the clusters in the radius $D=2R_c$. This kind of clustering is called clustering type II.
\subsection{Classifying and weighing SBSs}
We want to classify SBSs and assign a weight to each one so that SBSs that fall into a class have a special feature, and SBSs with different weights have different values for us. To achieve this,
we apply the Matern Core-type I to clustering the SBSs who do not have any SBS in their coverage range of them. In other words, when a user is in the range of this SBS, it can only be serviced by that SBS and has no choice to take services from the other SBSs. It is therefore reasonable that such a SBS be filled with the most popular files. The Matern Core-type II clusters SBSs which are in the dense area. In such an environment, the user has access to more than one SBS simultaneously. Therefore, in these scenarios, we try to put more files in SBSs that each user access them at any time by avoiding placing duplicate files in the memories of neighbor SBSs. In addition to use two processes to determine the dense area, we use them to weight SBSs. This weightening is due to the fact that SBSs in more dense areas with larger weights get more popular files because they are placed in places where the user density may be higher.
This procedure goes as algorithm \ref{jk4}.
\begin{algorithm}\label{jk4}
\caption{classifying and weighing  of SBSs}
\begin{algorithmic}
\floatname{algorithm}{Procedure}
\renewcommand{\algorithmicrequire}{\textbf{Input:}}
\renewcommand{\algorithmicensure}{\textbf{Output:}}
\REQUIRE SBS Position as a Point Poisson Process $N_b$ and range of classification($R_c$)\\
\FORALL {$S_i \in S, 1\leq i\leq |S|$ }
\STATE $W_{S_i}\gets 0$
\STATE $D_{S_i}\gets \{S_j; d(S_i,S_j)\leq R_c, 1\leq j\leq |S|\}$
\ENDFOR

\WHILE {$W_{S_i}\neq \emptyset, 1\leq i\leq |S|$}
\STATE apply Matern Core type-I and II and achieve $N_b^{I}$ and $N_b^{II}$
\FORALL {$S_i$ in location $y \in N_b^{I}\cup N_b^{I}$}
\FORALL {$S_j\in D_{S_i}$}
\STATE $W_{S_j}\gets W_{S_j}+1$
\ENDFOR
\ENDFOR
\ENDWHILE
\ENSURE $ D_{S_i}$ as SBSS co-class with $ {S_i}$, for $1\leq i\leq |S|$ and $W_{S_i}$ as weigh for $ {S_i}$.
\end{algorithmic}
\end{algorithm}
In this algorithm, we receive the SBSs positions as a point poisson process $N_b$ and range of classification($R_c$). The distance between SBSs is used as a classification parameter.  First, the distance between each SBS and the other SBSs are calculated, and SBSs that are within the distance of less than the range of classification are selected as the co-class of this SBS. The initial weight of each SBS  set is zero. This method is applied on SBSs until the weights of all SBSs is non zero. In this method, the points of the processes of Matern Core type-I and II are selected, and the weight of the SBSs placed in the  SBS class corresponds to the selected points increases one unit. At the end, the outputs are $ D_{S_i}$ as SBSs who are co-class with $ {S_i}$, for $1\leq i\leq |S|$ and $W_{S_i}$ as the weight for $ {S_i}$.
\subsection{Coloring of the constructed graph of SBS}
Now when SBSs are classified and weighted, we intend to construct graph of SBSs using this classification and weightening so that we can apply coloring  on it. We assume that $\{C_1,C_2,\cdots ,C_{l}\}$ are the classes made by the algorithm 2.  We make  graph $G=(V(G),E(G,I_G)$. The set $V(G)=\{v_1,v_2,\cdots,v_{|S|\}}$ are the vertices of this graph that for each SBS, we have an equivalent vertex of the same index.  E(G) are the edges of this graph.  $I_G$ is a map that matches each member of V(G) to a non-order pair of vertices in V(G).

\begin{eqnarray}\label{jk6}
 \left\{
\begin{array}{ll}
I_G:E(G)\longrightarrow V(G)\\
I_G(e_k)=(v_i,v_j)~~if ~~v_i, v_j~~ in ~~same ~~class
\end{array}
\right..
\end{eqnarray}
In the other word, we have an edge between every two vertices if both vertices are placed in the same class. The vertices of this graph also have the corresponding weights derived from the algorithm 2. The placement graph  is constructed using algorithm 3  based on the obtained graph  in (\ref{jk6}).
\begin{algorithm}\label{jk5}
\caption{Coloring and Placement based on Coloring}
\begin{algorithmic}
\floatname{algorithm}{Procedure}
\renewcommand{\algorithmicrequire}{\textbf{Input:}}
\renewcommand{\algorithmicensure}{\textbf{Output:}}
\REQUIRE constructed graph  in \ref{jk6}\\
\WHILE {Every vertex has color}
\FORALL {$v_i \in V(G), 1\leq i\leq |S|$ }
\STATE By priority from the vertices with the highest weights, assign each vertex the smallest color that has not a neighbor vertex with that color.
\ENDFOR
\ENDWHILE
\FORALL {$SBS \in S$}
\STATE Fill in memory of SBS based on the high priority of the corresponding color index
\ENDFOR
\ENSURE Placement Graph
\end{algorithmic}
\end{algorithm}
 Vertex coloring is done on this graph with three conditions. 1)the coloring priority is with a vertex that has a higher weight. 2)to color each vertex, the coloring is applied in such a way  that the neighbors do not have the same color 3)to color each vertex, the smallest possible color is used. Our goal is to place the most popular files in the memory of SBSs so that the SBSs who have overlap coverage range, cache the separate files and at the same time the most popular files are embedded in SBSs leads to the serving most of the requested files. Based on the obtained placement graph and the access graph, which is changing due to the movement of the users, the delivery graph is made. This graph specifies the files available to each user at the  delivery phase according to its position.
\section{Simulation Results}
In this section, we intend to evaluate the performance of the proposed methods. Consider a circular cell with a radius of 350 meters \cite{mj10}. Suppose a single MBS covers all of this cell and $|S|$ SBSs is randomly distributed  in this cell so that each SBS has a  cover range of 80 meters. A total of 1,000 mobile users are randomly located in this cell and they are mobile. The system supports a collection of 1000 unit-sized files that have a popularity distribution of Zipf with parameter $\alpha$ \cite{mj10}. Each SBS can store 50 files, which is equivalent to 0.05 of total files.

To illustrate the effectiveness of the proposed methods, we calculate the hit rate value, which is the ratio of the number of requests answered  with cache of SBS to the total requests. We have calculated performance of the proposed methods in terms of the number of SBSs and Zipf distribution parameter ($\alpha$).

The performance of the proposed methods are presented in Fig. 3. We assume that the parameter of Zipf distribution ($\alpha$) is 0.6 and derived the hit rate versus  the number of SBSs in the network.   The curves in Fig. 3. are for three cases

 1)The memory of each SBS is cached which the most popular files, which has the role of the benchmark in our evaluation and is  as the baseline popularity in figure

    2)The memory of each SBS is cached with the method presented in Section \ref{method1}, which is described as the proposed content placement 1 in the figure

      3)The memory of each SBS is cached with the method presented in Section \ref{method2} , which is described as the proposed content placement 2 in the figure.

      As shown in Fig. 3., the two proposed methods have better performance compared to the base method and  improved the network performance from the hit rate aspect via simulation.

\begin{figure}[t!]\label{simBs}
\begin{center}
\includegraphics[width=0.53\textwidth]{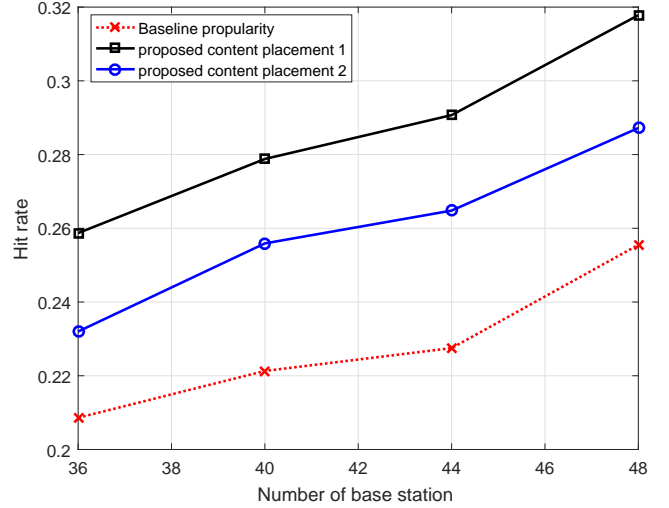}
\caption{Hit rate versus the number of SBSs.}
\end{center}
\end{figure}

In Fig. 4. the performance of the network has been investigated in different popularity distribution of files. We assumed that the number of SBSs in the network are 48.  The results are evaluated as same as the three cases mentioned in previous figure.  As this figure shows, the two proposed methods have better performance compared to the benchmark (base method).
\begin{figure}[t!]\label{SimZip}
\begin{center}
\includegraphics[width=0.53\textwidth]{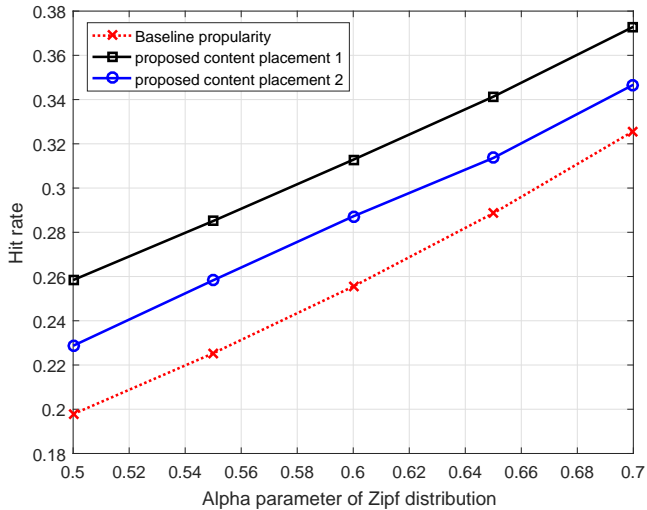}
\caption{Hit rate versus the Zipf distribution parameter ($\alpha$).}
\end{center}
\end{figure}

It should be mentioned that, although the results of the second method are not as good as the first one, its advantage is its simplicity in the coloring process compared to  the first method. In the first method, the graph is colored by solving an integer programming problem which is an NP-hard problem, but in the second method, the planning problem is replaced by a simple coloring method based on the weight assigned to the SBSs. This simple coloring method may uses different colors than the solution obtained by solving the optimizing integer programming problem of  graph coloring, but the simplicity of the method is its wealth.

From a different viewpoint, in order to compare the performance of the two described thresholding methods, consider the previous network and assume that SBSs have a random coverage range between 50 and 100 meters. The results obtained in both  individual and universal thresholding are given in the Fig. 5 compared with  the random most populared caching case.
\begin{figure}[t!]\label{thresh}
\begin{center}
\includegraphics[width=0.53\textwidth]{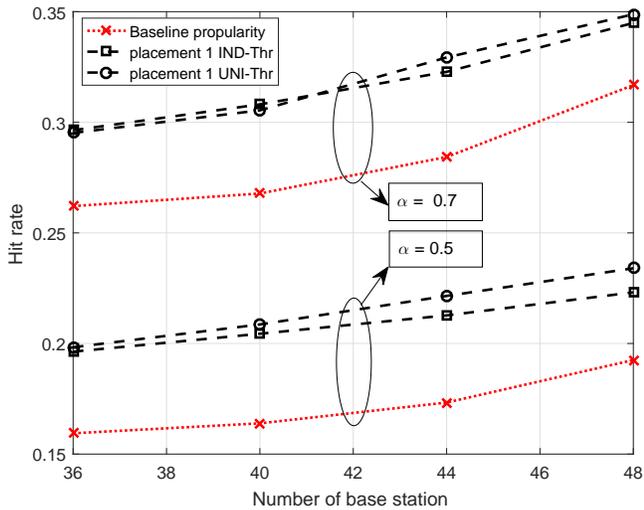}
\caption{Hit rate versus the number of SBSs for different Zipf parameter ($\alpha$).}
\end{center}
\end{figure}
\section{conclusion}
In this paper, we have considered a SBS network and studied the reduction of  the traffic load on the MBS using  our proposed caching strategy. We have mapped  the problem of  caching popular files to a graph which is divided into four new sub-graphs serving as the foundation of graph coloring in our two  presented methods.  In the first method, we  considered caching strategy  as a  NP-hard coloring problem and an  algorithm is introduced to color the graph or equivalently to cache the files. In the second method, using two point processes Matern Core-type I and II, all SBSs have been classified and weighted. The classification formed the SBSs graph and the weights which determined the priority of coloring in that graph.


%

\ifCLASSOPTIONcaptionsoff
  \newpage
\fi



%


\end{document}